\newcommand{\openone}{\leavevmode\hbox{\small1\normalsize\kern-.33em1}}
\newcommand{\Eziid}{E_{0}}
\newcommand{\Eriid}{E_{r}}
\newcommand{\rcu}{\mathrm{rcu}}
\newcommand{\rcus}{\rcu_{s}}
\newcommand{\rcuss}{\rcu_{s}^{*}}
\newcommand{\rcushat}{\widehat{\mathrm{rcu}}_{s}}
\newcommand{\rcusshat}{\widehat{\mathrm{rcu}}_{s}^{*}}
\newcommand{\rhohat}{\hat{\rho}}
\newcommand{\Rcrs}{R_s^{\mathrm{cr}}}
\newcommand{\Ptilde}{\widetilde{P}}
\newcommand{\hover}{\overline{h}}
\newcommand{\xbar}{\overline{x}}
\newcommand{\xvbar}{\overline{\boldsymbol{x}}}
\newcommand{\xv}{\boldsymbol{x}}
\newcommand{\Xvbar}{\overline{\boldsymbol{X}}}
\newcommand{\Xv}{\boldsymbol{X}}
\newcommand{\yv}{\boldsymbol{y}}
\newcommand{\Yv}{\boldsymbol{Y}}
\newcommand{\Fc}{\mathcal{F}}
\newcommand{\Ic}{\mathcal{I}}
\newcommand{\Yc}{\mathcal{Y}}
\newcommand{\EE}{\mathbb{E}}
\newcommand{\PP}{\mathbb{P}}
\newcommand{\ZZ}{\mathbb{Z}}
\newcommand{\Qsf}{\mathsf{Q}}
\newcommand{\defeq}{\triangleq}
\newcommand{\var}{\mathrm{Var}}
\newcommand{\betanl}{\beta^{\mathrm{nl}}_n}
\newcommand{\betal}{\beta^{\mathrm{l}}_n} 
\theoremstyle{plain}
\theoremstyle{plain}
\newtheorem{thm}{\protect\theoremname}
\theoremstyle{plain}
\newtheorem{lem}{\protect\lemmaname} 
\DeclareMathOperator*{\argmax}{arg\,max}
  \providecommand{\propositionname}{Proposition}
\providecommand{\theoremname}{Theorem}
\providecommand{\lemmaname}{Lemma}
\begin{document}

\title{The Saddlepoint Approximation: Unified Random Coding Asymptotics for Fixed and Varying Rates}

\author{
\authorblockN{Jonathan Scarlett} \authorblockA{University of Cambridge \\ \tt{jms265@cam.ac.uk}} 
\and 
\authorblockN{Alfonso Martinez} \authorblockA{Universitat Pompeu Fabra \\ \tt{alfonso.martinez@ieee.org}} 
\and 
\authorblockN{Albert Guill{\'e}n i F{\`a}bregas} \authorblockA{ ICREA \& Universitat Pompeu Fabra\\ University of Cambridge\\ \tt{guillen@ieee.org}}}

\maketitle
\long\def\symbolfootnote[#1]#2{\begingroup\def\thefootnote{\fnsymbol{footnote}}\footnote[#1]{#2}\endgroup}
\begin{abstract}
    This paper presents a saddlepoint approximation of the random-coding union
    bound of Polyanskiy \emph{et al.} for i.i.d. random coding over discrete memoryless
    channels.  The approximation is single-letter, and can thus be computed efficiently. 
    Moreover, it is shown to be asymptotically tight for both fixed and varying rates, 
    unifying existing achievability results in the
    regimes of error exponents, second-order coding rates, and moderate deviations. 
    For fixed rates, novel exact-asymptotics expressions are specified to within a 
    multiplicative $1+o(1)$ term.  A numerical example is provided for which 
    the approximation is remarkably accurate even at short block lengths.
\end{abstract}
\symbolfootnote[0]{
    This work has been funded in part by the European Research Council under 
    ERC grant agreement 259663, by the European Union's 7th Framework Programme 
    (PEOPLE-2011-CIG) under grant agreement 303633 and by the Spanish Ministry of 
    Economy and Competitiveness under grants RYC-2011-08150 and TEC2012-38800-C03-03.}

\vspace{-0.5cm}    
\section{Introduction} \label{sec:INTRO}

In this paper, we consider problem of channel coding over a discrete memoryless channel $W(y|x)$.
There exists extensive literature studying the tradeoff between the rate $R$, error probability 
$p_e$ and block length $n$, including:
\begin{enumerate}
    \item Error exponents ($R < C$, exponentially decaying $p_e$) \cite{Gallager};
    \item Second-order coding rates ($R \to C$, fixed $p_e$) \cite{Strassen,Finite};
    \item Moderate deviations ($R \to C$ and $p_e \to 0$ simultaneously) \cite{ModerateDev},
\end{enumerate}  
where $C$ is the capacity.
These asymptotic notions provide valuable insight, 
but at finite block lengths it is generally unclear which one
dictates the performance.

In \cite[Sec. III]{Finite}, a non-asymptotic approach was taken. 
The most powerful of the achievability bounds therein is the 
random-coding union (RCU) bound, given by
\vspace{-0.15cm} 
\begin{multline}
    \rcu(n,M) \triangleq \EE\big[\min\big\{1, \\
    (M-1)\PP[W^{n}(\Yv|\Xvbar)\ge W^{n}(\Yv|\Xv)\,|\,\Xv,\Yv]\big\}\big], \label{eq:SU_RCU}
\end{multline}
where $M=e^{nR}$ is the number of messages, $(\Xv,\Yv,\Xvbar) \sim Q^n(\xv)W^n(\yv|\xv)Q^n(\xvbar)$,
$W^{n}(\yv|\xv)\triangleq\prod_{i=1}^{n}W(y_{i}|x_{i})$, and  $Q^n(\xv)\triangleq\prod_{i=1}^{n}Q(x_i)$
for some input distribution $Q$ (here we focus on i.i.d. random coding).
The RCU bound has been shown to be close to non-asymptotic converse bounds 
in several numerical examples \cite{Finite}, but its computation 
is generally prohibitively complex beyond symmetric setups.

In \cite{Saddlepoint}, a saddlepoint approximation \cite{SaddlepointBook} 
was derived for a weakened bound, obtained from \eqref{eq:SU_RCU} using
Markov's inequality:
\begin{equation}
    \rcus(n,M) \triangleq \EE\Big[\min\big\{1,(M-1)e^{-i_s^n(\Xv,\Yv)} \big\}\Big], \label{eq:SU_RCU_s}
\end{equation}
where $s>0$ is arbitrary, and we define the generalized information density 
\begin{align}
    i_s^n(\xv,\yv) &\triangleq \sum_{i=1}^{n}i_s(x_i,y_i) \label{eq:SU_is_n} \\
    i_s(x,y)       &\triangleq \log\frac{W(y|x)^s}{\sum_{\xbar}Q(\xbar)W(y|\xbar)^s}. \label{eq:SU_is}
\end{align}
The approximation in \cite{Saddlepoint} is \emph{single-letter} and takes the form
$\rcushat(n,M) = \alpha_n(Q,R,s)e^{-n\Eriid(Q,R,s)}$,
where $\Eriid$ and $\alpha_n$ represent the error exponent and the subexponential prefactor respectively.
Numerical examples in \cite{Saddlepoint} showed the approximation 
to be remarkably tight, while being essentially as easy to compute
as the exponent alone.  However, its derivation used heuristic arguments.
The techniques of this paper formalize these arguments, and yield
\begin{equation} 
    \lim_{n\to\infty} \frac{\rcushat(n,M_n)}{\rcus(n,M_n)} = 1 \label{eq:SU_SaddleRCUs}
\end{equation}
at both fixed and varying rates. Moreover, both the lattice and 
non-lattice case (see Section \ref{sec:SU_REFINED_IID}) are handled.
Since $\rcus$ can be used to derive the random-coding exponent
\cite[Ch. 5]{Gallager}, channel dispersion \cite{Finite} and moderate
deviations result \cite{ModerateDev}, we conclude from \eqref{eq:SU_SaddleRCUs}
that $\rcushat$ unifies these regimes.

In Theorem \ref{thm:SA_SaddleRCU} below, we present a refined asymptotic bound $\rcuss$
and a corresponding saddlepoint approximation $\rcusshat$ which is tight in the sense of 
\eqref{eq:SU_SaddleRCUs}, and which is seen to approximate the more powerful bound $\rcu$ 
remarkably well numerically (see Figure \ref{fig:SA_IID_R}).
This saddlepoint approximation not only unifies the above-mentioned regimes, but also
characterizes the higher-order asymptotics.  In particular, for a fixed error
probability the approximation captures the third-order $\frac{1}{2}\log n$
term \cite[Sec. 3.4.5]{FiniteThesis}, and for a fixed rate we obtain
the prefactor growth rate derived in \cite{RefinementJournal} (see also
\cite{PaperRefinement}), along with a novel characterization 
of the multiplicative $O(1)$ terms. 

\vspace{-0.15cm}
\section{Preliminary Definitions and Results} \label{sec:SU_PRELIM}

We henceforth make use of the standard asymptotic notations $O(\cdot)$, $o(\cdot)$, 
$\Theta(\cdot)$, $\Omega(\cdot)$ and $\omega(\cdot)$.  

\subsubsection{Information Density Moments and $E_0$ Function}

We write the mean and variance of the information density as
\begin{align}
    I_s(Q)         &\triangleq \EE[i_s(X,Y)] \label{eq:SU_Is} \\
    U_s(Q)         &\triangleq \var[i_s(X,Y)] \label{eq:SU_Us},
\end{align}
where $(X,Y)\sim Q\times W$.  Note that $I_1(Q) = I(X;Y)$.

Following Gallager \cite[Ch. 5]{Gallager}, we define the $E_0$ function
\begin{equation}
    \Eziid(Q,\rho,s) \triangleq -\log\EE\big[e^{-\rho i_s(X,Y)}\big] \label{eq:SU_E0s_IID}
\end{equation} 
and the random-coding error exponent
\begin{equation}
    \Eriid(Q,R) \triangleq \sup_{s>0,\rho\in[0,1]}\Eziid(Q,\rho,s) - \rho R. \label{eq:SU_Exponent}
\end{equation}
While the supremum is achieved by $s=\frac{1}{1+\rho}$ \cite[Ex. 5.6]{Gallager}, 
it will be convenient to consider an arbitrary choice of $s>0$.

The optimal $\rho$ in \eqref{eq:SU_Exponent} for a given value of $s$ is denoted by
\begin{equation}
    \rhohat(Q,R,s) \triangleq \argmax_{\rho\in[0,1]} \Eziid(Q,\rho,s) - \rho R, \label{eq:SU_rho_hat}
\end{equation} 
and the critical rate is defined as
\begin{equation}
    \Rcrs(Q) \triangleq \sup\big\{R\,:\,\rhohat(Q,R,s)=1\big\}.
\end{equation}
We define the following derivatives associated with \eqref{eq:SU_rho_hat}:
\begin{align}
    c_{1}(Q,R,s) & \triangleq R-\frac{\partial \Eziid(Q,\rho,s)}{\partial\rho}\bigg|_{\rho=\rhohat(Q,R,s)}\label{eq:SA_c1} \\
    c_{2}(Q,R,s) & \triangleq-\frac{\partial^{2}\Eziid(Q,\rho,s)}{\partial\rho^{2}}\bigg|_{\rho=\rhohat(Q,R,s)}. \label{eq:SA_c2} 
\end{align}
The following properties of the above quantities coincide with those given by
Gallager \cite[pp. 141-143]{Gallager}, and follow by adapting the arguments
therein to the case of a fixed $s>0$:
\begin{itemize}
  \item If $U_s(Q)>0$, then $c_{2} > 0$ for all $R$;
  \item For $R \in \big[0,\Rcrs(Q)\big)$, we have $\rhohat=1$ and $c_{1}<0$;
  \item For $R \in \big[\Rcrs(Q),I_s(Q)\big]$, $\rhohat$ is strictly decreasing in $R$, and $c_{1}=0$;
  \item For $R > I_s(Q)$, we have $\rhohat=0$ and $c_{1}>0$.
\end{itemize}
Here and throughout the paper, the arguments to $\rhohat$, $c_1$, etc. are
omitted when their values are clear from the context. 

\subsubsection{Singular vs. Non-Singular Case}

Given an input distribution $Q$ and channel $W$, we define the set
\begin{multline}
    \Yc_{1}(Q) \triangleq \Big\{ y\,:\, W(y|x) \ne W(y|\xbar)\text{ for some } x,\xbar \\ 
    \text{ such that }Q(x)Q(\xbar)W(y|x)W(y|\xbar)>0\Big\}. \label{eq:REF_SetY1}
\end{multline}
Following the terminology of Altu\u{g} and Wagner \cite{RefinementJournal}, we say 
that $(Q,W)$ is singular if $\Yc_{1}(Q) = \emptyset$, and non-singular
otherwise.  Our techniques can be used to handle both 
cases.  In the singular case, we in fact have $\rcu=\rcu_s$ \cite{JournalSU}, and hence 
\eqref{eq:SU_SaddleRCUs} gives the desired result regarding
the approximation of $\rcu$.  
In fact, our analysis can be applied directly
to the dependence-testing (DT) bound \cite{Finite}, which improves
(slightly) on $\rcu$ for singular channels.
We focus on the non-singular case, and refer the reader to
\cite{JournalSU} for the singular case.

\subsubsection{Further Definitions}

We say that $Z$ is a lattice random variable with offset $\gamma$ and span $h$ if its 
support is a subset of the lattice $\{\gamma+ih\,:\,i\in\ZZ\}$, and the same cannot remain true by increasing $h$.
Our main result treats two cases separately depending on whether $i_s(X,Y)$ is a lattice variable. 

The density of a $N(\mu,\sigma^2)$ random variable is denoted by
\begin{equation}
    \phi(z;\mu,\sigma^2) \triangleq \frac{1}{\sqrt{2\pi\sigma^2}}e^{-\frac{(z-\mu)^2}{2\sigma^2}}. \label{eq:SU_phi}
\end{equation} 
In the lattice case, we similarly write
\begin{equation}
    \phi_h(z;\mu,\sigma^2) \triangleq \frac{h}{\sqrt{2\pi\sigma^2}}e^{-\frac{(z-\mu)^2}{2\sigma^2}}. \label{eq:SU_phi_h}
\end{equation} 

The remaining definitions are somewhat more technical.
We define the reverse conditional distribution
\begin{align}
    \Ptilde_s(x|y) & \defeq\frac{Q(x)W(y|x)^{s}}{\sum_{\xbar}Q(\xbar)W(y|\xbar)^{s}}, \label{eq:REF_Vs}
\end{align}
the joint tilted distribution  
\begin{equation}
    P_{\rhohat,s}^{*}(x,y)=\frac{Q(x)W(y|x)e^{-\rhohat i_s(x,y)}}{\sum_{x^{\prime},y^{\prime}}Q(x^{\prime})W(y^{\prime}|x^{\prime})e^{-\rhohat i_s(x',y')}}, \label{eq:SU_P*XY}
\end{equation}
and the conditional variance
\begin{equation}
    c_3(Q,R,s) \triangleq \EE\Big[ \var\big[i_s(X^{*}_s,Y_s^{*}) \big| Y_s^{*} \big] \Big], \label{eq:REF_sigma_s}
\end{equation}
where $(X_s^*,Y_s^*) \sim P_{\rhohat,s}^{*}(y)\Ptilde_s(x|y)$, and $P_{\rhohat,s}^{*}(y)$
is the $y$-marginal of \eqref{eq:SU_P*XY}.  We have \cite[Eq. (61)]{PaperRefinement}
\begin{align} 
     \var_{\Ptilde_s(\cdot|y)}[i_{s}(X_s,y)] > 0 \iff y\in\Yc_{1}(Q).\label{eq:REF_VarProof4}
\end{align} 
Furthermore, using \eqref{eq:SU_P*XY}, we have $P_{\rhohat,s}^{*}(y) > 0$ if
and only if $\sum_{x}Q(x)W(y|x)>0$. Combining these, we see that the  
non-singularity assumption implies $c_3 > 0$ for all $R$ and $s>0$.

Finally, we define
\begin{align}
    \Ic_s &\triangleq \Big\{ i_s(x,y) \,:\, Q(x)W(y|x)>0,y\in\Yc_1(Q)\Big\} \\
    \psi_{s} &\triangleq 
        \begin{cases}
            1 & \Ic_s\text{ does not lie on a lattice} \\
            \frac{\hover}{1-e^{-\hover}} & \Ic_s\text{ lies on a lattice with span }\hover.
        \end{cases} \label{eq:SA_Psi_s}
\end{align}

\section{Main Result} \label{sec:SU_REFINED_IID}
 
Our saddlepoint approximation is written in the form
\begin{equation}
    \rcusshat(n,M)\triangleq\beta_n(Q,R,s)e^{-n(\Eziid(Q,\rhohat,s)-\rhohat R)}. \label{eq:SA_RCU_hat}
\end{equation}
We treat the lattice and non-lattice cases separately, defining
\begin{align}
    \hspace*{-0.2cm} \beta_n \triangleq \begin{cases} \betanl & i_s(X,Y)\text{ is non-lattice} \\ \betal & R-i_s(X,Y)\text{ has offset $\gamma$ and span $h$}, \end{cases} \label{eq:SU_Beta_n} 
\end{align}
where
\begin{align}
    & \betanl(Q,R,s) \triangleq \int_{\log\frac{\sqrt{2\pi nc_3}}{\psi_s}}^{\infty}e^{-\rhohat z}\phi(z;nc_1,nc_2)dz \nonumber \\ 
        & + \frac{\psi_s}{\sqrt{2\pi nc_3}}\int_{-\infty}^{\log\frac{\sqrt{2\pi nc_3}}{\psi_s}}e^{(1-\rhohat)z}\phi(z;nc_1,nc_2)dz, \label{eq:SA_PreFactorRCU}
\end{align} 
\begin{align}
    &\betal(Q,R,s) \triangleq \sum_{i=i^{*}}^\infty e^{-\rhohat(\gamma_n+ih)}\phi_h(\gamma_n+ih;nc_1,nc_2) \nonumber \\
        & + \frac{\psi_s}{\sqrt{2\pi nc_3}}\sum_{i=-\infty}^{i^{*}-1}e^{(1-\rhohat)(\gamma_n+ih)}\phi_h(\gamma_n+ih;nc_1,nc_2), \label{eq:SA_PreFactorRCU_L}
\end{align}
and where in \eqref{eq:SA_PreFactorRCU_L} we define
\begin{align}
    \gamma_n &\triangleq \min\Big\{ n\gamma+ih \,:\, i\in\ZZ, n\gamma + ih \ge 0 \Big\} \label{eq:SU_alphan}, \\
    i^{*}    &\triangleq \min\bigg\{ i\in\ZZ \,:\, \gamma_n + ih \ge \log\frac{\sqrt{2\pi nc_3}}{\psi_s} \bigg\}. \label{eq:SU_i*}
\end{align}
While \eqref{eq:SA_PreFactorRCU} and \eqref{eq:SA_PreFactorRCU_L} are written in
terms of integrals and summations, both are single-letter and can be computed 
efficiently, with a complexity which is independent of $n$.  In the non-lattice case, 
this is done by noting that
\begin{equation}
    \int_{a}^{\infty}e^{bz}\phi(z;\mu,\sigma^2)dz = e^{\mu b + \frac{1}{2}\sigma^2b^2}\Qsf\Big(\frac{a-\mu-b\sigma^2}{\sigma}\Big). \label{eq:SU_Integral}
\end{equation}
In the lattice case, we can write each summation in
\eqref{eq:SA_PreFactorRCU_L} as
\begin{equation}
    \sum_i e^{b_0 + b_1i + b_2i^2} = e^{-\frac{b_1^2}{4b_2}+b_0}\sum_i e^{b_2(i+\frac{b_1}{2b_2})^2}, \label{eq:SA_SampleSum} \\
\end{equation}
where $b_2 < 0$.  We can thus obtain an accurate approximation by
keeping only the terms in the summation such that $i$ is sufficiently 
close to $-\frac{b_1}{2b_2}$.  Overall, the computational complexity for any given $s>0$
is similar to that of computing the error exponent alone.  
In principle, the parameter $s$ may be further optimized, but numerical studies indicate
that it suffices to choose $s=\frac{1}{1+\rhohat}$ (i.e. the value maximizing $E_0(Q,\rhohat,s)$).

\begin{thm} \label{thm:SA_SaddleRCU}
    Fix the input distribution $Q$, constant $s>0$, and sequence of positive integers
    $\{M_n\}_{n\ge1}$.  If the pair $(Q,W)$ is non-singular, then 
    \begin{equation}
        \rcu(n,M_n) \le \rcuss(n,M_n)(1+o(1)), \label{eq:SU_PfRCU_1}
    \end{equation}
    where
    \begin{equation}
        \rcuss(n,M) \triangleq \EE\bigg[\min\bigg\{1,\frac{M\psi_s}{\sqrt{2\pi nc_3}}e^{-i_s^n(\Xv,\Yv)} \bigg\}\bigg]. \label{eq:SU_RCU_s_star}
    \end{equation}
    Furthermore, we have
    \begin{equation}
        \lim_{n\to\infty} \frac{\rcusshat(n,M_n)}{\rcuss(n,M_n)} = 1. \label{eq:SU_SaddleRCU}
    \end{equation} 
\end{thm}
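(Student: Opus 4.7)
The plan is to establish the two claims separately. For the inequality $\rcu(n,M_n)\le\rcuss(n,M_n)(1+o(1))$, I would sharpen the Markov step that leads from $\rcu$ to $\rcus$ by deriving exact asymptotics for the inner conditional probability. Using that $W^n(\yv|\xvbar)\ge W^n(\yv|\xv)\iff i_s^n(\xvbar,\yv)\ge i_s^n(\xv,\yv)$ for any $s>0$ and tilting $\Xvbar\sim Q^n$ to the product measure $\prod_i\Ptilde_s(\cdot|y_i)$, for which $\Ptilde_s(x|y)/Q(x)=e^{i_s(x,y)}$, the inner probability rewrites as
\begin{equation*}
    e^{-i_s^n(\xv,\yv)}\EE_{\Ptilde_s^n(\cdot|\yv)}\!\bigl[e^{-T_n}\mathbf{1}\{T_n\ge 0\}\bigr], \quad T_n\triangleq i_s^n(\Xvbar,\yv)-i_s^n(\xv,\yv).
\end{equation*}
A local limit theorem (Stone in the non-lattice case, Gnedenko in the lattice case) applied to $T_n$ conditionally on $\yv$ produces a prefactor $\psi_s/\sqrt{2\pi n\bar c_3(\yv)}$ with $\bar c_3(\yv)=\tfrac{1}{n}\sum_i\var_{\Ptilde_s(\cdot|y_i)}[i_s(\cdot,y_i)]$; non-singularity combined with \eqref{eq:REF_VarProof4} guarantees positivity of these per-symbol variances on a $\yv$-set of overwhelming probability. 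To replace $\bar c_3(\yv)$ by the constant $c_3$ inside $\min\{1,\cdot\}$, I change the outer measure to $P^{*n}_{\rhohat,s}$: its $y$-marginal has expected conditional variance equal to $c_3$ by definition \eqref{eq:REF_sigma_s}, so the weak law of large numbers gives the substitution at the cost of a $(1+o(1))$ factor, yielding $\rcu\le\rcuss(1+o(1))$.

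For the second claim $\rcuss/\rcusshat\to 1$, the same outer tilting gives a clean identity. Since $d(Q\times W)^n/dP^{*n}_{\rhohat,s}=e^{-n\Eziid+\rhohat i_s^n(\Xv,\Yv)}$, one obtains
\begin{equation*}
    \rcuss(n,M)=e^{-n(\Eziid-\rhohat R)}\EE_{P^*}\!\bigl[\min\{e^{\rhohat S_n},\,Mc_n e^{-(1-\rhohat)S_n}\}\bigr],
\end{equation*}
with $c_n=\psi_s/\sqrt{2\pi nc_3}$ and $S_n=i_s^n(\Xv,\Yv)$. Under $P^{*n}_{\rhohat,s}$, $S_n$ is a sum of i.i.d.\ variables with mean $n(R-c_1)$ and variance $nc_2$ by the standard derivative identities \eqref{eq:SA_c1}-\eqref{eq:SA_c2}. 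A local CLT on $S_n$ under $P^*$ replaces its density/pmf by $\phi(\cdot;n(R-c_1),nc_2)$ (or $\phi_h$ in the lattice case), and the change of variable $z=nR-S_n$ maps the two branches of the $\min$, which split at $S_n=nR-L_n$ with $L_n\triangleq\log\tfrac{\sqrt{2\pi nc_3}}{\psi_s}$, onto exactly the two terms of \eqref{eq:SA_PreFactorRCU} (respectively \eqref{eq:SA_PreFactorRCU_L}), giving $\rcuss=e^{-n(\Eziid-\rhohat R)}\beta_n(1+o(1))$.

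The main obstacle is controlling the local CLT remainder uniformly enough that the $(1+o(1))$ can be extracted from under both the $\min$ and the outer expectation without blowing up as $Mc_ne^{-S_n}$ crosses unity. This is particularly delicate at the boundary cases $\rhohat\in\{0,1\}$, corresponding to $R>I_s(Q)$ or $R<\Rcrs(Q)$, where the saddlepoint sits at an endpoint of $[0,1]$ and the sign of $c_1$ changes the asymptotic weight of the two terms in $\betanl$/$\betal$; the three rate regimes $R\in[0,\Rcrs)$, $R\in[\Rcrs,I_s]$, and $R>I_s$ likely need to be treated on their own. In the lattice case, additional book-keeping through the offset $\gamma_n$ in \eqref{eq:SU_alphan} produces the factor $\psi_s=\hover/(1-e^{-\hover})$ as the discrete geometric-sum analogue of $\int_0^\infty e^{-z}\,dz=1$.
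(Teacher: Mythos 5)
Your overall route is the same as the paper's: tilt the competing codeword to $\Ptilde_s(\cdot|y)$ and use a local limit theorem to refine the inner tail probability to $\frac{\psi_s}{\sqrt{2\pi n c_3}}e^{-t}(1+o(1))$, and tilt $(\Xv,\Yv)$ to $P^{*}_{\rhohat,s}$ so that the outer expectation becomes $e^{-n(\Eziid-\rhohat R)}$ times a quantity whose Gaussian approximation is exactly $\beta_n$. However, two steps are genuinely incomplete. First, in the proof of the inequality, ``the weak law of large numbers gives the substitution at the cost of a $(1+o(1))$ factor'' does not close the argument: the atypical-$\yv$ contribution must be compared not to $1$ but to $\rcuss$ itself, which is exponentially small with a prefactor that can be as small as $\Theta(n^{-(1+\rhohat)/2})$, so an $o(1)$ probability bound is not enough. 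One needs (i) a quantitative statement that the atypical contribution, after bounding $\min\{1,\cdot\}$ by $(\cdot)^{\rhohat}$, is \emph{exponentially} smaller than $e^{-n\Eziid(Q,\rhohat,s)}$ (the paper's Lemma~\ref{lem:REF_SetF}, part 2, stated uniformly in $\rhohat\in[0,1]$ so that varying $R_n$ is covered), and (ii) a matching lower bound $\rcuss=\Omega(\beta_n e^{-n(\Eziid-\rhohat R)})$, which comes from the second claim of the theorem --- so the logical order is the reverse of yours: prove \eqref{eq:SU_SaddleRCU} first and use it to absorb the atypical term. You also need the $o(1)$ in the inner-tail bound to be uniform in the threshold $t$ and over typical $\yv$, which is built into the paper's Lemma~\ref{lem:REF_Lem20}.

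Second, for \eqref{eq:SU_SaddleRCU} the step you defer as ``the main obstacle'' is the actual substance of the proof, and your plan for it has a technical flaw: under $P^{*}_{\rhohat,s}$ the summands are finitely supported, so in the non-lattice case $S_n$ has neither a density nor a pmf on a common lattice, and a Stone-type local CLT does not let you ``replace its density/pmf by $\phi$.'' The paper instead works at the CDF level: it writes $\rcuss=I_n e^{-n(\Eziid-\rhohat R)}$ as a Stieltjes integral, inserts the refined (Edgeworth) expansion $\hat F_n=\Phi+G_n+o(n^{-1/2})$ uniformly in $z$, identifies the $\Phi$-term exactly with $\betanl$, and then shows the $G_n$ and remainder contributions are $o(\betanl)$ by a growth-rate case analysis over the regimes $R^{*}\in[0,\Rcrs(Q))$, $R^{*}=\Rcrs(Q)$, $R^{*}\in(\Rcrs(Q),I_s(Q))$, $R^{*}=I_s(Q)$, $R^{*}>I_s(Q)$, including varying rates and the boundary behaviour $\rhohat\to 0,1$, $c_1\to 0$ (Table~\ref{tab:SU_rcusGrowth}). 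Without carrying out this uniform error control, the $(1+o(1))$ cannot be extracted, so the proposal gives the correct skeleton but leaves the decisive estimates unproven. (Minor: your tilted identity for $\rcuss$ carries an extra $e^{n\rhohat R}$ --- either keep $e^{-n\Eziid}$ outside with $M$ inside the min, or pull out $e^{n\rhohat R}$ and express the min in terms of $z=nR-S_n$.)
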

\begin{IEEEproof}
    See Section \ref{sub:SA_RCU_PROOF}.
\end{IEEEproof}

The proof of Theorem \ref{thm:SA_SaddleRCU} reveals that
for a fixed target error probability we have
$\rcusshat = \rcuss + O\big(\frac{1}{\sqrt n}\big)$.  From the analysis given 
in \cite[Sec. 3.4.5]{FiniteThesis}, setting $\rcuss = \epsilon$ and solving
for the required number of messages yields
\begin{equation}
    \log M = nI_s(Q) - \sqrt{nU_s(Q)}\Qsf^{-1}(\epsilon) + \frac{1}{2}\log n + O(1). \label{eq:SU_ThirdOrder}
\end{equation}
By Taylor expanding the $\Qsf^{-1}$ function, we conclude that
the same is true of $\rcusshat$.  Note that since $I_1(Q) = I(X;Y)$, 
\eqref{eq:SU_ThirdOrder} is primarily of interest when $s=1$ and $Q$ achieves capacity.

For a fixed rate $R \ge 0$, we can apply asymptotic expansions to 
\eqref{eq:SA_PreFactorRCU}--\eqref{eq:SA_PreFactorRCU_L} to show the following 
\cite{JournalSU} (here $f_n \asymp g_n$ means that $\lim_{n\to\infty}\frac{f_n}{g_n}=1$):
\begin{itemize}
  \item If $R\in[0,\Rcrs(Q))$, then $\beta_n(Q,R,s) \asymp \frac{\psi_s}{\sqrt{2\pi nc_3}}$.
  \item If $R=\Rcrs(Q)$, then $\beta_n(Q,R,s) \asymp \frac{\psi_s}{2\sqrt{2\pi nc_3}}$.
  \item If $R\in(\Rcrs(Q),I_s(Q))$, then
        \begin{align}
            &\betanl(Q,R,s)  \asymp \bigg(\frac{\psi_s}{\sqrt{2\pi nc_3}}\bigg)^{\rhohat} \frac{1}{\sqrt{2\pi nc_2}\rhohat(1-\rhohat)}, \label{eq:SU_AboveRcr2} \\
            &\betal(Q,R,s) \asymp \bigg(\frac{\psi_s}{\sqrt{2\pi nc_3}}\bigg)^{\rhohat} \frac{h}{\sqrt{2\pi nc_2}} \nonumber \\
                & \times \Bigg(e^{-\rhohat\gamma'_n}\bigg(\frac{1}{1-e^{-\rhohat h}}\bigg) + e^{(1-\rhohat)\gamma'_n}\bigg(\frac{e^{-(1-\rhohat)h}}{1-e^{-(1-\rhohat)h}}\bigg)\Bigg), \label{eq:SU_AboveRcr_L2}
        \end{align}
        where $\gamma'_n \triangleq \gamma_n + i^{*}\,h - \log\frac{\sqrt{2\pi nc_3}}{\psi_s} \in [0,h)$ (see \eqref{eq:SU_i*}).
  \item If $R=I_s(Q)$, then $\beta_n(Q,R,s) \asymp \frac{1}{2}$.
  \item If $R>I_s(Q)$, then $\beta_n(Q,R,s) \asymp 1$.
\end{itemize} 
When combined with Theorem \ref{thm:SA_SaddleRCU}, these expansions 
provide an alternative proof of the main result of Altu\u{g}-Wagner \cite{RefinementJournal},
and an explicit characterization of the multiplicative $O(1)$ terms. 

\subsection{Numerical Example} \label{sub:SA_NUMERICAL}
 
A numerical example is given in Figure \ref{fig:SA_IID_R} (see the caption
for details).  Definitions of the error exponent and normal approximations
can be found in \cite{Finite}, and the exact asymptotics approximation
equals the right-hand side of \eqref{eq:SU_AboveRcr_L2}.
We set $s=1$ for the normal approximation, and $s=\frac{1}{1+\rhohat}$ 
for the other approximations.

We see that the saddlepoint approximation provides an excellent approximation of
$\rcu(n,M)$.   The exact asymptotics approximation is
accurate other than a divergence near the critical rate.  A similar divergence
also occurs near capacity, but this is not visible in the plot; see \cite{JournalSU} 
for further discussion.  In this example, neither the error exponent approximation
nor normal approximation is accurate, though the latter moves closer 
to $\rcu$ upon including the $\frac{1}{2}\log n$ term.
Roughly speaking, the normal (respectively, error exponent) approximation
is better suited to rates near capacity (respectively, low rates), whereas 
the saddlepoint approximation is accurate at all rates.

It should be noted that the observed accuracy of the saddlepoint approximation
is not limited to symmetric setups; see \cite{Saddlepoint,JournalSU} for
further examples.

\begin{figure}
    \begin{centering} 
        \includegraphics[width=0.95\columnwidth]{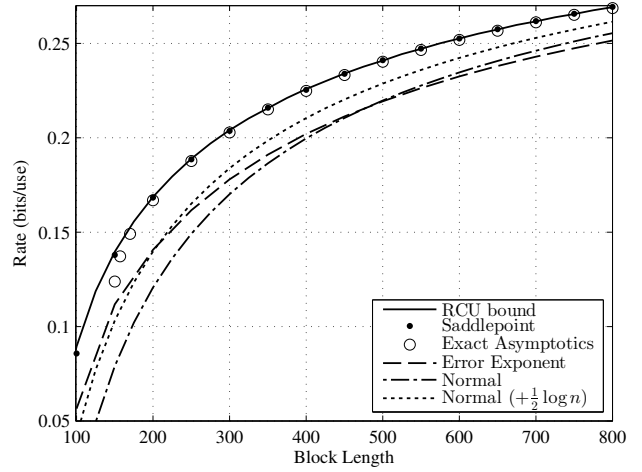}
        \par 
    \end{centering}
 
    \vspace{-3mm}
    \caption{Rate required to achieve a target error probability $\epsilon=10^{-5}$ for
             the binary symmetric channel with crossover probability $\delta=0.15$,
             and the uniform input distribution $Q=(\frac{1}{2},\frac{1}{2})$.
             This corresponds to the lattice case in \eqref{eq:SU_Beta_n}.
             The capacity and critical rate are 0.390 bits/use and 0.124 bits/use.
             \vspace{-3mm}
        \label{fig:SA_IID_R}}
\end{figure}

\section{Proof of Theorem \ref{thm:SA_SaddleRCU}} \label{sub:SU_SADDLEPOINT_PROOF}

Due to space constraints, we omit some details and focus on the non-lattice
case. Full details can be found in \cite{JournalSU}.

\subsection{Proof of \eqref{eq:SU_SaddleRCU}} 

\subsubsection{Alternative Expressions for $\rcuss$} \label{sub:SA_ABOVE_RCR}

For any non-negative random variable $A$, we have $\EE[\min\{1,A\}]=\PP[A \ge U]$,
where $U$ is uniform on $(0,1)$ and independent of $A$.  Defining
$g_n \triangleq \frac{1}{\psi_s}\sqrt{2\pi nc_3}$, we can thus write
\eqref{eq:SU_RCU_s_star} as
\begin{equation}
    \rcuss(n,M)=\PP\bigg[nR-\sum_{i=1}^{n}i_{s}(X_{i},Y_{i})\ge\log(Ug_n)\bigg]. \label{eq:SA_Above_1}
\end{equation} 
Let $F(t)$ denote the cumulative distribution function (CDF) of $R-i_{s}(X,Y)$ 
and let $Z_{1},\cdots,Z_{n}$ be i.i.d. with CDF
\begin{equation}
    F_{Z}(z)=e^{\Eziid-\rhohat R}\int_{-\infty}^{z}e^{\rhohat t}dF(t), \label{eq:SA_Above_2}
\end{equation}
where the arguments to $\Eziid$ are kept implicit.
Using a standard change of measure argument, we showed in \cite[Eq. (44)]{PaperRefinement} that
\begin{equation}
    \rcuss(n,M) = I_n e^{-n(\Eziid(Q,\rhohat,s)-\rhohat R)}, \label{eq:SA_Above_3} 
\end{equation}
where
\begin{equation}
    I_n \triangleq \int_0^1\int_{\log(ug_n)}^\infty e^{-\rhohat z} dF_{n}(z) dF_{U}(u), \label{eq:SA_ChgMeasure2}
\end{equation}   
and where $F_n$ is the CDF of $\sum_{i=1}^n Z_i$, and $F_U$ is the CDF of U.  
Moreover, we showed in \cite[Eqs. (48)--(49)]{PaperRefinement} that 
\begin{equation}
    \EE[Z] = c_1, \quad \var[Z] = c_2, \label{eq:SU_MGFder1}
\end{equation}
where $c_1$ and $c_2$ are defined in \eqref{eq:SA_c1}--\eqref{eq:SA_c2}.
It is not difficult to show that the non-singularity assumption
implies $U_s(Q)>0$, which in turn implies $c_2>0$ (see Section \ref{sec:SU_PRELIM}).

Since the integrand in \eqref{eq:SA_ChgMeasure2} is non-negative, we can 
safely interchange the order of integration, yielding 
\begin{align}
    I_n &= \int_{-\infty}^{\infty} \int_0^{\min\big\{1,\frac{1}{g_n}e^z\big\}} e^{-\rhohat z} dF_U(u) dF_n(z) \label{eq:SA_ChgMeasure4} \\
        &= \int_{\log g_n}^{\infty} e^{-\rhohat z} dF_n(z) + \frac{1}{g_n}\int_{-\infty}^{\log g_n} e^{(1-\rhohat)z} dF_n(z), \label{eq:SA_ChgMeasure5a} 
\end{align} 
where \eqref{eq:SA_ChgMeasure5a} follows by splitting the integral according
to which value achieves the $\min\{\cdot,\cdot\}$ in \eqref{eq:SA_ChgMeasure4}.
Letting $\hat{F}_n$ denote the CDF of $\frac{\sum_{i=1}^n Z_i - nc_1}{\sqrt{nc_2}}$,
we can write \eqref{eq:SA_ChgMeasure5a} as
\begin{multline}
     I_n = \int_{\frac{\log g_n-nc_1}{\sqrt{nc_2}}}^{\infty} e^{-\rhohat(z\sqrt{nc_2}+nc_1)} d\hat{F}_{n}(z) \\ 
        + \frac{1}{g_n}\int_{-\infty}^{\frac{\log g_n-nc_1}{\sqrt{nc_2}}} e^{(1-\rhohat)(z\sqrt{nc_2}+nc_1)} d\hat{F}_{n}(z). \label{eq:SA_ChgMeasure5} 
\end{multline} 
 
\subsubsection{Application of a Refined Central Limit Theorem}

Let $\Phi(z)$ denote the CDF of a zero-mean unit-variance Gaussian
random variable. Using the fact that $\EE[Z]=c_1$ and $\var[Z]=c_2>0$, we have from 
the refined central limit theorem in \cite[Sec. XVI.4, Thm. 1]{Feller} that
\begin{equation}
    \hat{F}_{n}(z)=\Phi(z)+G_{n}(z)+\tilde{F}_{n}(z),\label{eq:SA_Above_6}
\end{equation}
where $\tilde{F}_{n}(z)=o(n^{-\frac{1}{2}})$ uniformly in $z$, and
\begin{equation}
    G_{n}(z)\triangleq\frac{K}{\sqrt{n}}(1-z^{2})e^{-\frac{1}{2}z^{2}}\label{eq:SA_Above_7}
\end{equation}
for some constant $K$ depending only on the variance and third absolute moment
of $Z$. Substituting \eqref{eq:SA_Above_6} into \eqref{eq:SA_ChgMeasure5}, we obtain
\begin{equation}
    I_{n}=I_{1,n}+I_{2,n}+I_{3,n},\label{eq:SA_Above_8} 
\end{equation}
where the three terms denote the right-hand side of \eqref{eq:SA_ChgMeasure5}
with $\Phi$, $G_n$ and $\tilde{F}_n$ respectively in place of $\hat{F}_n$.  
By reversing the step from \eqref{eq:SA_ChgMeasure5a} to \eqref{eq:SA_ChgMeasure5},
we see that $I_{1,n}$ is precisely $\betanl$ in 
\eqref{eq:SA_PreFactorRCU}. In accordance with the theorem statement, we must show that 
$I_{2,n}=o(\betanl)$ and $I_{3,n}=o(\betanl)$ even 
when $R$ and $\rhohat$ vary with $n$.  Let $R_n \triangleq \frac{1}{n}\log M_n$
and $\rhohat_n\triangleq\rhohat(Q,R_n,s)$, and let $c_{1,n}$ and $c_{2,n}$ be the corresponding
values of $c_1$ and $c_2$.  We assume with no real loss of generality that
\begin{align}
    & \lim_{n\to\infty}R_n = R^{*} \label{eq:SA_Rstar}
\end{align}  
for some $R^{*}\ge0$ possibly equal to $\infty$.  Once \eqref{eq:SU_SaddleRCU}
is proved for all such $R^{*}$, the same will follow for
arbitrary $\{R_n\}$. 

\begin{table*}
    \caption{Growth rates of $\betanl$, $I_{2,n}$ and $I_{3,n}$ when the rate
    converges to $R^{*}$. \label{tab:SU_rcusGrowth}}
    \begin{centering}
    \begin{tabular}{|c|c|c|c|c|c|c|}
    \hline 
     & $\hat{\rho}$ & $c_{1}$ & Dominant Term(s) & $\betanl$ & $I_{2,n}$ & $I_{3,n}$\tabularnewline
    \hline 
    \hline 
    $R^{*}\in[0,\Rcrs(Q))$ & $1$ & $<0$ & 2 & $\Theta\Big(\frac{1}{\sqrt n}\Big)$ &  $\Theta\Big(\frac{1}{n}\Big)$ & $o\Big(\frac{1}{n}\Big)$\tabularnewline
    \hline 
    $R^{*}=\Rcrs(Q)$ & $\to1$ & $\to0$ & 2 & $\omega\Big(\frac{1}{n}\Big)$ &  $O\Big(\frac{1}{n}\Big)$ & $o\Big(\frac{1}{n}\Big)$\tabularnewline
    \hline 
    $R^{*}\in(\Rcrs(Q),I_{s}(Q))$ & $\in(0,1)$ & $0$ & 1,2 & $\Theta\Big(\frac{1}{n^{\frac{1}{2}(1+\rhohat)}}\Big)$ & $\Theta\Big(\frac{1}{n^{\frac{1}{2}(2+\rhohat)}}\Big)$ & $o\Big(\frac{1}{n^{\frac{1}{2}(1+\rhohat)}}\Big)$ \tabularnewline
    \hline 
    $R^{*}=I_{s}(Q)$ & $\to0$ & $\to0$ & 1 & $\omega\Big(\frac{1}{\sqrt{n}}\Big)$ & $O\Big(\frac{1}{\sqrt{n}}\Big)$ & $o\Big(\frac{1}{\sqrt{n}}\Big)$\tabularnewline
    \hline 
    $R^{*}>I_{s}(Q)$ & $0$ & $>0$ & 1 & $\Theta(1)$ & $\Theta\Big(\frac{1}{\sqrt{n}}\Big)$ & $o\Big(\frac{1}{\sqrt{n}}\Big)$\tabularnewline
    \hline 
    \end{tabular}
    \par\end{centering}
\end{table*}

Table \ref{tab:SU_rcusGrowth} summarizes the growth rates $\betanl$, $I_{2,n}$ and
$I_{3,n}$ for various ranges of $R^*$, and indicates whether the first
or second integral (see \eqref{eq:SA_ChgMeasure5}) dominates the behavior
of each.  We see that $I_{2,n}=o(\betanl)$ and $I_{3,n}=o(\betanl)$ for
all $R^{*}$, as desired.
 
As an example, we consider the case $R^{*}\in(\Rcrs(Q),I_{s}(Q))$.  
The given behavior of $\betanl$ follows immediately from \eqref{eq:SU_AboveRcr2}.
Taking the derivative of $G_n(z)$ in \eqref{eq:SA_Above_7},
we can evaluate $I_{2,n}$ by writing it in terms of the standard Gaussian  
density $\phi(z)=\frac{1}{\sqrt{2\pi}}=e^{-z^2/2}$.  
For $I_{3,n}$, we analyze the two integrals in a similar fashion;
here we focus on the first. For the integration range given,
the integrand is upper bounded by $e^{-\rhohat \log g_n} = \Theta(n^{-\rhohat/2})$. 
Combining this with the fact that $\tilde{F}_{n}(z)=o(n^{-\frac{1}{2}})$ uniformly in $z$,
we obtain the desired $o(n^{-\frac{1}{2}(1+\rhohat)})$ decay rate.

\subsection{Proof of \eqref{eq:SU_PfRCU_1}} \label{sub:SA_RCU_PROOF}

To prove \eqref{eq:SU_PfRCU_1}, we make use of two technical lemmas, whose 
proofs can be found in \cite[Appendix F]{JournalSU}.

\begin{lem} \label{lem:REF_Lem20}
    Fix $K>0$, and for each $n$, let $(n_1,\cdots,n_K)$ be integers such that $\sum_{k}n_k=n$.
    Fix the probability mass functions (PMFs) $Q_1,\cdots,Q_K$ on
    a common finite alphabet, and let $\sigma_1^2,\cdots,\sigma_K^2$ be the corresponding
    variances.  Let $Z_{1},\cdots,Z_{n}$ be independent random variables, $n_k$ of which
    are distributed according to $Q_k$ for each $k$.  Suppose that 
    $\min_{k}\sigma_k > 0$ and $\min_{k}n_k = \Theta(n)$.  
    Defining
    \begin{align}
        \Ic_0  &\triangleq \bigcup_{k \,:\, \sigma_{k} > 0}\big\{ z \,:\, Q_k(z)>0 \big\} \label{eq:SA_I0} \\
        \psi_0 &\triangleq 
            \begin{cases}
                1 & \Ic_0\text{ \em does not lie on a lattice} \\
                \frac{h_0}{1-e^{-h_0}} & \Ic_0\text{ \em lies on a lattice with span }h_0,
            \end{cases} \label{eq:SA_Psi_0}
    \end{align}
    the sum $S_n\triangleq\sum_{i}Z_i$ satisfies the 
    following uniformly in $t$:
    \begin{equation}
        \EE\Big[e^{-S_n}\emph{\openone}\big\{S_n \ge t\big\}\Big] \le e^{-t}\bigg(\frac{\psi_0}{\sqrt{2\pi V_n}} + o\Big(\frac{1}{\sqrt{n}}\Big)\bigg), \label{eq:SU_Lemma20}
    \end{equation}
    where $V_n \triangleq \var[S_n]$, and $\emph{\openone}\{\cdot\}$ is the indicator function.
\end{lem}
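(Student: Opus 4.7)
The plan is to peel off the exponential decay by writing
\[
\EE\!\left[e^{-S_n}\openone\{S_n \ge t\}\right] = e^{-t}\,\EE\!\left[e^{-(S_n-t)}\openone\{S_n \ge t\}\right],
\]
and then to bound the residual expectation using a local central limit theorem (LCLT) for $S_n$. The hypotheses $\min_k \sigma_k > 0$ and $\min_k n_k = \Theta(n)$ guarantee $V_n = \Theta(n)$ and that the Lindeberg-type conditions required for a non-identically-distributed LCLT hold.

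In the lattice case, $\Ic_0$ lies on a lattice of span $h_0$, so $\supp(S_n)$ does too. A Gnedenko-type LCLT for independent non-identical lattice summands gives the uniform bound $\PP[S_n = s] \le h_0/\sqrt{2\pi V_n} + o(1/\sqrt{n})$. Letting $s^{*}\in[t,t+h_0)$ denote the smallest element of $\supp(S_n)$ with $s^{*}\ge t$, a geometric series identity yields $\sum_{j\ge 0} e^{-(s^{*}+jh_0)} = e^{-s^{*}}/(1-e^{-h_0}) \le e^{-t}/(1-e^{-h_0})$, which combined with the LCLT bound gives
\[
\EE\!\left[e^{-S_n}\openone\{S_n \ge t\}\right] \le e^{-t}\!\left(\frac{\psi_0}{\sqrt{2\pi V_n}} + o\!\left(\frac{1}{\sqrt{n}}\right)\right),
\]
since $\psi_0 = h_0/(1-e^{-h_0})$; the constant $1/(1-e^{-h_0})$ is absorbed into the $o(1/\sqrt n)$ term.

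In the non-lattice case, I would replace the pointwise LCLT by a Stone-type integrated version of the form $\PP[S_n \in [a, a+\Delta]] \le (\Delta/\sqrt{2\pi V_n})(1 + o(1))$, uniformly in $a$, for $\Delta$ small. Partitioning $[t,\infty)$ into intervals of length $\Delta$ and upper-bounding $e^{-s}$ by its value at the left endpoint, the analogous geometric sum yields
\[
\EE\!\left[e^{-S_n}\openone\{S_n \ge t\}\right] \le e^{-t}\!\left(\frac{\Delta/(1-e^{-\Delta})}{\sqrt{2\pi V_n}} + o\!\left(\frac{1}{\sqrt{n}}\right)\right).
\]
Choosing $\Delta=\Delta_n\to 0$ at a rate for which the LCLT error remains $o(1/\sqrt{n})$ after summation, the prefactor $\Delta_n/(1-e^{-\Delta_n})\to 1 = \psi_0$, which gives the claim.

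The main obstacle is obtaining the LCLT in the required sharp uniform form -- additive error $o(1/\sqrt{n})$ uniformly in the location -- for independent but \emph{non-identically distributed} summands. In the non-lattice case there is the further complication of calibrating $\Delta_n\to 0$ so that the leading constant tends to $\psi_0=1$ while the cumulative LCLT error, amplified by the geometric sum's $1/(1-e^{-\Delta_n})$ factor, remains $o(1/\sqrt{n})$. Once these LCLT bounds are in place, uniformity in $t$ follows automatically from uniformity in the location, and the remainder is the elementary geometric computation above.
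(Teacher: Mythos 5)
Your proposal follows essentially the same route as the paper: the paper's proof is precisely the slicing/geometric-sum argument of Lemma 47 of Polyanskiy \emph{et al.}, with the Berry--Esseen step replaced by local limit theorems for the lattice and non-lattice cases, which is exactly the structure you describe (pointwise LCLT plus geometric series in the lattice case, Stone-type interval LCLT with a calibrated width in the non-lattice case). The only inputs you leave unproved --- sharp LLTs with $o(1/\sqrt{n})$ error, uniform in location, for independent non-identically distributed summands --- are the same ones the paper takes as external citations, so there is no gap relative to the paper's own level of detail.
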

\begin{proof}
    The proof is analogous to that of \cite[Lemma 47]{Finite}, except that the use
    of the Berry-Esseen theorem is replaced by the local limit theorems in \cite[Thm. 1]{LLTNonLattice}
    and \cite[Sec. VII.1, Thm. 2]{PetrovBook} for the non-lattice and lattice 
    cases respectively.
\end{proof}

Define the random variables
\begin{equation}
(\Xv,\Yv,\Xvbar,\Xv_{s}) \sim Q^{n}(\xv)W^{n}(\yv|\xv)Q^{n}(\xvbar)\Ptilde_s^{n}(\xv_{s}|\yv), \label{eq:REF_BoldVars}
\end{equation}
where $\Ptilde_s^{n}(\xv|\yv) \triangleq \prod_{i=1}^n \Ptilde_s(x_i|y_i)$.
We write the empirical distribution of $\yv$ as $\hat{P}_{\yv}$,
and we let $P_{\Yv}$ denote the PMF of $\Yv$.
\begin{lem} \label{lem:REF_SetF}
    Let $s>0$ and $\rhohat\in[0,1]$ be given. If the pair $(Q,W)$ is non-singular, then the set
    \begin{equation}
        \Fc_{\rhohat,s}^n(\delta) \defeq \Big\{\yv\,:\,P_{\Yv}(\yv)>0, \, \max_{y}\big|\hat{P}_{\yv}(y)-P_{\rhohat,s}^{*}(y)\big| \le \delta\Big\} \label{eq:REF_SetFn}
    \end{equation}
    satisfies the following properties:
    \begin{enumerate}
        \item For any $\yv\in\Fc_{\rhohat,s}^n(\delta)$, we have
        \begin{equation}
            \var\big[i_{s}^{n}(\Xv_{s},\Yv)\,|\,\Yv=\yv\big]\ge n(c_3 - r(\delta)), \label{eq:REF_VarLB}
        \end{equation}
        where $r(\delta)\to0$ as $\delta\to0$.
        \item For any $\delta>0$, we have
        \begin{align}
            \hspace*{-0.5cm}\liminf_{n\to\infty}-\frac{1}{n}\log\frac{\sum_{\xv,\yv\notin\Fc_{\rhohat,s}^n(\delta)}Q^{n}(\xv)W^{n}(\yv|\xv)e^{-\rhohat i_{s}^{n}(\xv,\yv)}}{e^{-nE_0(Q,\rhohat,s)}} > 0. \label{eq:REF_LimRatio}
        \end{align}
    \end{enumerate}
\end{lem}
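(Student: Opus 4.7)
My plan is to prove the two parts separately. For Part~1, the key observation I would exploit is that, conditional on $\Yv = \yv$, the construction in \eqref{eq:REF_BoldVars} makes the components of $\Xv_{s}$ conditionally independent, with the $i$-th component distributed as $\Ptilde_s(\cdot|y_i)$. Writing $v(y) \triangleq \var_{\Ptilde_s(\cdot|y)}[i_s(X_s, y)]$ when $\sum_{\xbar} Q(\xbar) W(y|\xbar) > 0$, and $v(y) \triangleq 0$ otherwise, I would decompose
\begin{equation}
\var\bigl[i_s^n(\Xv_{s}, \Yv) \,\bigm|\, \Yv = \yv\bigr] = \sum_{i=1}^n v(y_i) = n \sum_y \hat{P}_\yv(y) v(y).
\end{equation}
Since $c_3 = \sum_y P^*_{\rhohat,s}(y) v(y)$ by \eqref{eq:REF_sigma_s}, the hypothesis $\max_y|\hat{P}_\yv(y) - P^*_{\rhohat,s}(y)| \le \delta$ bounds the gap between the two sums by $|\Yc|\,\delta \max_y v(y)$, which is finite because the alphabets are finite. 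Setting $r(\delta) \triangleq |\Yc|\,\delta \max_y v(y)$ then yields \eqref{eq:REF_VarLB}.

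For Part~2, I would use a change of measure to the tilted joint distribution $P^*_{\rhohat,s}$. Because $e^{-E_0(Q,\rhohat,s)} = \sum_{x,y} Q(x) W(y|x) e^{-\rhohat i_s(x,y)}$ by definition, \eqref{eq:SU_P*XY} gives the per-letter identity
\begin{equation}
Q^n(\xv) W^n(\yv|\xv) e^{-\rhohat i_s^n(\xv,\yv)} = e^{-n E_0(Q,\rhohat,s)} \prod_{i=1}^n P^*_{\rhohat,s}(x_i, y_i).
\end{equation}
Summing both sides over $\xv$ and over $\yv \notin \Fc_{\rhohat,s}^n(\delta)$, the ratio inside the $\liminf$ in \eqref{eq:REF_LimRatio} is exactly $\PP_{(P^*_{\rhohat,s})^n}\bigl[\yv \notin \Fc_{\rhohat,s}^n(\delta)\bigr]$. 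Because $P^*_{\rhohat,s}(y) > 0$ forces $\sum_{x} Q(x) W(y|x) > 0$, the condition $P_{\Yv}(\yv) > 0$ holds almost surely under this product tilted measure, so the exceptional event reduces to $\max_y |\hat{P}_\yv(y) - P^*_{\rhohat,s}(y)| > \delta$. A Chernoff bound on the empirical distribution of the i.i.d.\ samples $y_i \sim P^*_{\rhohat,s}(\cdot)$, equivalently the method of types on the finite alphabet $\Yc$, then yields exponential decay at a strictly positive rate $E(\delta)$, giving the desired strict positivity of the $\liminf$.

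Neither part looks to pose a serious obstacle; both reduce to short finite-alphabet computations once the tilted measure is in place. The one point that requires a little care is verifying $E(\delta) > 0$, which I would handle by noting that the infimum of $D(P \,\|\, P^*_{\rhohat,s})$ over PMFs $P$ on $\Yc$ with $\max_y |P(y) - P^*_{\rhohat,s}(y)| \ge \delta$ is attained on a compact set (the probability simplex) and is strictly positive, since the Kullback--Leibler divergence vanishes only when its arguments coincide. Uniformity across varying rates is not an issue here since Lemma~\ref{lem:REF_SetF} is stated for a fixed $\rhohat$.
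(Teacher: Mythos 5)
Your proof is correct, and both parts go through as written: the conditional-independence decomposition $\var[i_s^n(\Xv_s,\Yv)\,|\,\Yv=\yv]=n\sum_y \hat{P}_\yv(y)v(y)$ together with $c_3=\sum_y P^*_{\rhohat,s}(y)v(y)$ gives Part~1 with $r(\delta)=|\Yc|\,\delta\max_y v(y)$, and the per-letter identity $Q(x)W(y|x)e^{-\rhohat i_s(x,y)}=e^{-E_0(Q,\rhohat,s)}P^*_{\rhohat,s}(x,y)$ turns the ratio in Part~2 into $\PP_{(P^*_{\rhohat,s})^n}[\Yv\notin\Fc^n_{\rhohat,s}(\delta)]$, which decays exponentially by Hoeffding/Sanov on the finite alphabet $\Yc$. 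The paper itself gives no argument here beyond citing a refinement of an earlier lemma, and that cited proof rests on exactly the same ingredients (the tilted measure $P^*_{\rhohat,s}$ and concentration of the empirical distribution of $\Yv$ under it), so your proposal is essentially the intended proof, with the added merit of being self-contained; note also that your bounds ($r(\delta)$ and the Hoeffding exponent $2\delta^2$) do not depend on $\rhohat$, which is what justifies the paper's subsequent remark that the lemma can be applied with $\rhohat$ varying in $[0,1]$.
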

\begin{proof}
    This is a simple refinement of \cite[Lemma 3]{PaperRefinement}.
\end{proof}

Since the two statements of Lemma \ref{lem:REF_SetF} hold true
for any $\rhohat\in[0,1]$, they also hold true when $\rhohat$ varies within this range,
thus allowing us to handle rates which vary with $n$.

By upper bounding $M-1$ by $M$ in \eqref{eq:SU_RCU}, we obtain
\begin{align}
    \rcu(n,M) \le S_0(\rhohat,s,\delta) + \sum_{\xv,\yv\in\Fc_{\rhohat,s}^n(\delta)}Q^n(\xv)W^{n}(\yv|\xv) \nonumber \\
        \times \min\Big\{1,M\PP[i_{s}^{n}(\Xvbar,\yv)\ge i_{s}^{n}(\xv,\yv)]\Big\}, \label{eq:REF_ExpandedRCU}
\end{align} 
where $S_0(\rhohat,s,\delta)$ is a sum of the same form
as the second term in \eqref{eq:REF_ExpandedRCU} with $\yv\notin\Fc_{\rhohat,s}^n(\delta)$,
and we have replaced $W^n$ by $i_s^n$ since each is an increasing
function of the other.  Following \cite[Sec. 3.4.5]{FiniteThesis}, 
we have the following when $\Ptilde^n_s(\xvbar,\yv)\ne0$:
\begin{align}
    Q^{n}(\xvbar) & =Q^{n}(\xvbar)\frac{\Ptilde_s^{n}(\xvbar|\yv)}{\Ptilde_s^{n}(\xvbar|\yv)} = \Ptilde_s^{n}(\xvbar|\yv)e^{-i_{s}^{n}(\xvbar,\yv)}.\label{eq:iidDeriv0}
\end{align}
Summing \eqref{eq:iidDeriv0} over all $\xvbar$ such that $i_{s}^{n}(\xvbar,\yv)\ge t$ yields
\begin{equation}
    \PP[i_{s}^{n}(\Xvbar,\yv)\ge t]
    =\EE\Big[e^{-i_{s}^{n}(\Xv_{s},\Yv)}\openone\big\{ i_{s}^{n}(\Xv_{s},\Yv)\ge t\big\}\,\Big|\,\Yv=\yv\Big]\label{eq:iidDeriv1}
\end{equation} 
under the joint distribution in \eqref{eq:REF_BoldVars}. 

We now observe that \eqref{eq:iidDeriv1} is of the same form as
the left-hand side of \eqref{eq:SU_Lemma20}.  We apply Lemma 
\ref{lem:REF_Lem20} with $Q_k$ given by the PMFs of $i_s(X_s,y)$
under $X_s \sim \Ptilde_s(\,\cdot\,|y)$ for the various $y$ values. 
We have from \eqref{eq:SU_Lemma20}, \eqref{eq:REF_VarLB} and \eqref{eq:iidDeriv1} that
\begin{equation}
    \PP\big[i_{s}^{n}(\Xvbar,\yv)\ge t\big] \le \frac{\psi_s}{\sqrt{2\pi n(c_3-r(\delta))}}e^{-t}(1+o(1)) \label{eq:REF_Tail1}
\end{equation}
for all $\yv\in\Fc_{\rhohat,s}^n(\delta)$ and sufficiently small $\delta$
(recall that $c_3 > 0$).
Here we have used the fact that $\psi_0$ in \eqref{eq:SA_Psi_0}
coincides with $\psi_s$ in \eqref{eq:SA_Psi_s}, which follows 
from \eqref{eq:REF_VarProof4} and the fact that $\Ptilde_s(x|y)>0$ if and
only if $Q(x)W(y|x)>0$ (see \eqref{eq:REF_Vs}).

Using the uniformity of the $o(1)$ term in $t$ in \eqref{eq:REF_Tail1} 
(see Lemma \ref{lem:REF_Lem20}), taking $\delta\to0$ (and 
hence $r(\delta)\to0$), and writing
\begin{equation}
    \min\{1,f_n(1+\zeta_n)\} \le (1 + |\zeta_n|)\min\{1,f_n\} \label{eq:SA_min1property},
\end{equation}  
we see that the second term in \eqref{eq:REF_ExpandedRCU} is upper bounded by 
$\rcuss(n,M)(1+o(1))$.  Finally, using \eqref{eq:REF_LimRatio} 
(along with \eqref{eq:SA_RCU_hat} and \eqref{eq:SU_SaddleRCU}), 
it is easily shown that $S_0(\rhohat,s,\delta)$ can be factored 
into the $1+o(1)$ term, thus completing the proof of \eqref{eq:SU_PfRCU_1}.

\bibliographystyle{IEEEtran}
\bibliography{12-Paper,18-MultiUser,18-SingleUser,35-Other}

\end{document}